\newtheorem{theorem}{Theorem}
\newtheorem{definition}{Definition}
\newtheorem{example}{Example}
\newcommand*{\FQF}{F^{\textsf{QF}}}
\newcommand*{\FCOQF}{F^{\textsf{CO-QF}}}
\newcommand*{\FCOQFp}{F^{\textsf{CO-QF'}}}
\newcommand{\subs}{\texttt{subsidy}~}
\begin{document}

\title{Fair Decisions through Plurality: Results from a Crowdfunding Platform}

\author{Joel Miller}
\affiliation{%
  \institution{University of Illinois Chicago}
  \city{Chicago}
  \country{USA}}
\email{jmill54@uic.edu}

\author{E. Glen Weyl}
\affiliation{%
  \institution{Plural Technology Co-Laboratory, Microsoft Research}
  \city{Seattle}
  \country{USA}}
\email{glenweyl@microsoft.com}

\author{Chris Kanich}
\affiliation{%
  \institution{University of Illinois Chicago}
  \city{Chicago}
  \country{USA}}
\email{ckanich@uic.edu}

\begin{abstract}
    We discuss an algorithmic intervention aimed at increasing equity and economic efficiency at a crowdfunding platform that gives cash subsidies to grantees. Through a blend of technical and qualitative methods, we show that the previous algorithm used by the platform -- Quadratic Funding (QF) -- suffered problems because its design was rooted in a model of individuals as isolated and selfish. We present an alternative algorithm -- \emph{Connection-Oriented Quadratic Funding} (CO-QF) -- rooted in a theory of plurality and prosocial utilities, and show that it qualitatively and quantitatively performs better than QF. CO-QF has achieved an 89\% adoption rate at the platform and has distributed over \$4 Million to date. In simulations we show that it provides better social welfare than QF. While our design for CO-QF was responsive to the needs of a specific community,  we also extrapolate out of this context to show that CO-QF is a potentially helpful tool for general-purpose public decision making.
\end{abstract}

\begin{CCSXML}
<ccs2012>
   <concept>
       <concept_id>10003752.10010070.10010099.10010101</concept_id>
       <concept_desc>Theory of computation~Algorithmic mechanism design</concept_desc>
       <concept_significance>500</concept_significance>
       </concept>
   <concept>
       <concept_id>10003120.10003130.10003134.10011763</concept_id>
       <concept_desc>Human-centered computing~Ethnographic studies</concept_desc>
       <concept_significance>500</concept_significance>
       </concept>
 </ccs2012>
\end{CCSXML}

\ccsdesc[500]{Theory of computation~Algorithmic mechanism design}
\ccsdesc[500]{Human-centered computing~Ethnographic studies}

\keywords{Algorithmic Fairness, Quadratic Funding, Public Goods, Plurality}

\maketitle

\section{Introduction}

Public goods funding is traditionally seen as a difficult problem in economics because wholly self-interested agents cannot fund public goods within usual market frameworks -- an idea reflected in the classic ``Tragedy of the Commons'' thought experiment~\cite{samuelson1954pure, hardin1968tragedy}. A mechanism called \textit{Quadratic Funding} (QF) addresses this problem, yielding outcomes that maximize utilitarian social welfare under typical economic assumptions. 
Quadratic Funding was implemented in practice by a crowdfunding and grants platform called Gitcoin,\footnote{\texttt{gitcoin.co}}\footnote{Although its name is a portmanteau of ``Git'' and ``Bitcoin'', Gitcoin is not formally affiliated with either.} which has used QF to award money to open-source software projects and other public goods since 2018. 

While exhibiting several benefits in theory, QF suffered problems with both economic efficiency and fairness in practice~\cite{gitcoin2022how}. In this paper, we describe the design, implementation, and operation of  \textit{Connection-Oriented Quadratic Funding} (CO-QF), an algorithm we created to address the problems QF faced. Since its introduction one and a half years ago, CO-QF has been used in 89\% of Gitcoin's grant-awarding ventures, distributing just over \$4 million USD.

We designed this algorithm in collaboration with the community, ensuring that it met the practical needs of stakeholders. In parallel, we developed a technical framework that could explain why QF faced the problems it did. We propose a model of \textit{prosocial utilities} as the best way to make sense of QF's issues from a mathematical perspective. CO-QF's design is rooted in this theory, and also draws inspiration from the normative concept of plurality. 

To understand the community's response to CO-QF we conduct interviews with stakeholders, showing that CO-QF was exceedingly favored as a distributive tool and seen as more fair. On the quantitative side, we use simulations to show that CO-QF provides better welfare guarantees than QF in the presence of prosocial utilities.



While our qualitative data are coupled to the unique circumstances of the Gitcoin community, we also provide a set of algorithmic design insights that are germane to CO-QF's general capabilities across wider contexts. Ultimately, the data suggest that CO-QF's alignment with prosocial outcomes, quantitative improvements in welfare, and qualitative community approval combine to constitute a promising direction for all manner of fair public decision making. 

\subsection{A Sociotechnical Approach}
Our paper draws on a line of computer science research that sees algorithms as ``heterogeneous and diffuse sociotechnical systems, rather than rigidly constrained and procedural formulas''~\cite{seaver2017algorithms}. Instead of simply focusing on the mathematical implications of a particular modeling assumption (prosocial utilities) on a particular mechanism (Quadratic Funding), we're also interested in understanding the Gitcoin community's varied perceptions of and ways of interacting with QF and CO-QF, 
enacting an intervention that accounts for ``the social and political contexts of specific, situated technical systems at their points of use''~\cite{katell2020toward}. 

This type of qualitative analysis is important firstly because ``over-relying on data-driven decision-making to inform policymaking commonly leads to proposed solutions that only superficially address societal problems,'' as Russo~\emph{et al} report~\cite{russo2024bridging}. In other words, a firm grasp of how the community practically concieves of QF and CO-QF is essential for creating a practically useful algorithm.  

Secondly, studying Gitcoin's social reality helps us understand CO-QF's broader applicability as a public decision making mechanism \textit{outside} of the community. As we will see, Gitcoin community members evaluated QF and CO-QF on several different axes. When CO-QF was straightforwardly evaluated as a public goods funding mechanism, it was highly favored. However, Gitcoin's unique context led community members to evaluate the algorithm in other ways as well. Disentangling which qualitative data are more specific to CO-QF's instantiation at Gitcoin, and which are less specific, is essential for understanding CO-QF's wider applicability. This line of analysis corresponds to Seaver's conception of algorithms as ``multiples''\cite{seaver2017algorithms}: we understand QF/ CO-QF not just as one process, but as loci for several inter-related patterns of behavior.

\subsection{Related Work}
For a general overview of public goods funding systems (excluding QF), see~\cite{batina2005public}. Quadratic Funding was first introduced in~\cite{buterin2019flexible}. Along theoretical lines, Freitas and Maldonado analyze QF under imperfect information~\cite{freitas2025quadratic}, while Li \emph{et al} study the effects of expert advice on funding outcomes~\cite{li2024decentralized}. Pasquini presents preliminary work analyzing the effects of bounded budgets on QF~\cite{pasquini2020quadratic,pasquini2022optimal} from both a mathematical and empirical perspective via statistics from Gitcoin. Quarfoot \emph{et al} and Cheng \emph{et al} show that using Quadratic Voting-like mechanisms on surveys yields more accurate understandings of user preferences~\cite{quarfoot2017quadratic,cheng2021can,cheng2025organize}. Dimitri provides an economic analysis of blockchain-based QV systems~\cite{dimitri2022quadratic}, while Wright provides a legal analysis~\cite{wright2019quadratic}. Kelter \emph{et al} study large scale agent-based simulations of quadratic votes~\cite{kelter2021agent}.

Nabben performed a qualitative digital ethnography of the Gitcoin community~\cite{nabben2023governance}, focusing on its organization structure and governance. 
We also provided an initial report on CO-QF's implementation and experiences using it at GitCoin~\cite{miller2024case}. For a general literature review of research on blockchain-based organizations (e.g. Gitcoin), see~\cite{bonnet2024decentralized}.

Other work has examined the economics of crowdfunding platforms that use direct donation mechanisms instead of QF. For example, Hong and Ryu explore the potential of these platforms to fund public goods with the help of the public sector~\cite{hong2019crowdfunding}, and Corazzini \emph{et al} study donor coordination~\cite{corazzini2015donor}.  

\subsection{Outline}
In Section~\ref{sec:model} we lay out a mathematical model of public goods funding. In Section~\ref{sec:interview methodology} we present our interview methodology. In Section~\ref{sec:QFs beginnings} we contextualize QF, explaining an important mathematical result and describing how the algorithm was adopted for use at Gitcoin. In Section~\ref{sec:QF problems} we explore issues with QF: first via a a quantitative analysis, and secondly via qualitative interview data reflecting the same issues suggested by our quantitative analysis. In Section~\ref{sec:COQF design} we explain the design principles behind CO-QF and explore how the algorithm reflects a pluralistic viewpoint. In Section~\ref{sec:COQF response} we describe the community's response to CO-QF through more qualitative data, showing that CO-QF was favored as a distributive mechanism. In Section~\ref{sec:simulations} we return to a quantitative standpoint and show simulations in which CO-QF is more economically efficient. We conclude with a discussion in Section~\ref{sec:discussion}.

\section{Mathematical Model}\label{sec:model}

We consider the issue of public goods funding. For notational simplicity, we will focus on the issue of deciding funding for a single public good. 

Let $N = \{1,\dots,n\}$ be a set of agents. We imagine a scenario where agents  contribute currency to a central mechanism, which decides how much funding to give to the public good. Let $c_i$ denote agent $i$'s monetary contribution to the mechanism. Our paper is broadly concerned with changes to an algorithm called Quadratic Funding~\cite{buterin2019flexible}.
\begin{definition}[Quadratic Funding (QF)]
    Given contributions $(c_i)_{i \in N}$, Quadratic Funding awards a public good 
    \[
    \FQF = \biggl(\sum_{i \in N} \sqrt{c_i}\biggr)^2
    \]
    in funding.
\end{definition}

With two or more contributors the funding awarded by QF is greater than the sum of individual contributions, so QF relies on the assumption that the mechanism designer has a pool of subsidy cash to award to public goods.\footnote{In other models, agents are taxed to make up for the difference, but we do not consider that scenario here.}

We assume each agent $i$ has a ``personal'' utility function $u_i(F)$ describing the currency-equivalent utility they receive when the public good receives funding level $F$. We assume each $u_i$ is concave, smooth, and increasing.

We also assume that agents exhibit prosocial behavior, i.e., they take the prospective well-being of others into account when making decisions. We employ a generalization of the longstanding ``sympathy coefficient'' model first posited in 1881 by Edgeworth~\cite{edgeworth1881mathematical} and later used and adapted by many other economists, e.g.~\cite{becker1974theory,bruce1990rotten,charness2002understanding,tilman2019localized,flanigan2023distortion}. 
\begin{definition}[Prosocial Utility Functions]
    Let $\alpha_{ij} > 0$ describe the sympathy coefficient of agent $i$ towards agent $j$. Then agent $i$'s prosocial utility function is 
    \[
    \hat{u}_i(F) = \sum_{j\in n}\alpha_{ij} \cdot u_{j}(F)
    \]
\end{definition}
Setting $\alpha_{ii} = 1$ and $\alpha_{ij}=0$ for all $i \neq j$ recovers selfish utility functions, and setting $\alpha_{ii} = 1$ and setting $\alpha_{ij}$ to the same constant for all $i \neq j$ recovers Edgeworth's original sympathy coefficient model.

Lastly, we discuss social welfare. In this paper we specifically study utilitarian social welfare, which is the benefit afforded to all members of society from the public good, minus the funding cost. 
\begin{definition}[(Utilitarian) Social Welfare]
    Suppose a public good is funded at level $F$. Then utilitarian social welfare is 
    \[
    USW(F) = \sum_{i \in N}u_i(F) - F
    \]
\end{definition}
Notice that we use the individual utility functions $u_i$, rather than the prosocial utility functions $\hat{u_i}$ in our definition of social welfare. This normative choice is common in the literature~\cite{hammond2018altruism,harsanyi1977morality}. It avoids double-counting utilities and reflects the idea that individuals may use $\hat{u}_i$ when making decisions (say, about how much to donate), but ultimately the best way to measure benefit to society is still to consider how much each individual benefits on their own.

\section{Interview Methodology}\label{sec:interview methodology}

In this section, we explain the methodology driving our qualitative interviews. The following study design passed a formal Institutional Review Board (IRB) process at an accredited U.S. institution.

\paragraph{Participants.}
Our participant pool consists of eight ``round managers'' (P1-P8) who were not employed at Gitcoin and volunteered their time to help manage funding rounds. We chose to interview this subset of the community to reduce sbias, since 1) Gitcoin employees may have been more inclined to speak in a way that highlighted the strengths of the organization and 2) grantees or donors may have been more inclined to favor the algorithm that gave more money to their preferred project(s). Although we designed CO-QF with feedback from the Gitcoin community, we elected to only interview individuals that we had not spoken with before. Participants were mostly male (7/8) and hailed from a range of locations (North America, South America, The Caribbean, Africa, Europe, and Asia).

\paragraph{Interview Structure.}
We conducted semi-structured video interviews with P1-P8. In our interviews, we first asked participants to give background on their relationship with Gitcoin and the surrounding community, and then asked them to reflect on what they liked and disliked about QF and CO-QF. We opted to ask simple, open-ended questions about QF and CO-QF so as to not bias interviewees towards thinking any specific issues or benefits were more important. 
Interviews ranged between 15 and 60 minutes in length.

\paragraph{Coding and Analysis.}
We used inductive thematic analysis \cite{braun2006using} to code the audio from the interviews. All coding was done by the first author, with review and approval from the other author(s). After the initial coding step, we used an axial coding process to understand broader themes in the data.

\paragraph{Limitations.} The pool of possible interviewees (round managers not employed by Gitcoin) is relatively small, limiting the amount of data we could collect. Our tactic of asking open-ended questions also has a limitation: if a participant \textit{did not} bring up a specific appraisal of a mechanism, then we cannot be sure if that appraisal was fully irrelevant to them or if they just didn't think to bring it up during the interview. 

\section{QF's theoretical development and implementation at Gitcoin}\label{sec:QFs beginnings}

Quadratic Funding was introduced in 2019 by Heitzig, Buterin and Weyl~\cite{buterin2019flexible}. It shares theoretical underpinnings with Weyl and Lalley's Quadratic Voting mechanism~\cite{lalley2018quadratic}.\footnote{Going back further, QV/ QF also build on work by 
Grovers and Ledyard~\cite{groves1977optimal} and Hylland and Zeckhauser~\cite{hylland1979mechanism}.}
While classical results from economics cast doubt on the ability of markets (i.e. direct donation schemes) and 1-person-1-vote schemes to optimally fund public goods, QF achieves optimally under a set of assumptions common in the economics literature~\cite{buterin2019flexible}. QF optimally funds a public good in the following sense: when selfish agents are allowed to repeatedly adjust their donations in response to the donations of others, the resultant game has a single Nash equilibrium where $\FQF$ maximizes utilitarian social welfare~\cite{buterin2019flexible}. 

Soon after QF's introduction, a crowdfunding and grants platform named Gitcoin started using it to award money to open-source software projects and other public goods. Gitcoin receives large donations from software foundations, NGOs, and corporations, with the mandate that the money be awarded to public goods, thus supplying the platform with the pool of subsidy cash required to run QF. 
Gitcoin's adaptation of QF features \textit{funding rounds} which admit a fixed set of (manually vetted) projects, have a fixed pool of subsidy cash to be awarded (usually \$50k-\$200k), and run for a set time period. Within that time period, users are free to donate any amount to any projects they see fit. After the end of the round, QF is used to obtain a ``raw'' subsidy amount for each project. This subsidy amount is calculated on a per-project basis as $F^{\textsf{QF}} - \sum_i c_i$, or equivalently via algorithm ~\ref{alg:QF}. In other words, a project's raw subsidy amount is the amount of funding QF ``adds on'' on top of the direct donations. Raw subsidy amounts are then normalized so that their sum matches the size of that round's subsidy pool.\footnote{Most rounds also implement a ``matching cap'' which prohibits any project from taking more than some percent of the subsidy pool -- if so, funding in excess of this amount is re-distributed to other projects below the matching cap.}
In total, each project is awarded its direct donations plus its portion of the subsidy pool.

\begin{algorithm}
    \KwInput{$c_{i}$ for $i \in N$ (contributions to the project)}
    \KwOutput{\subs (the un-normalized amount of subsidy funding awarded to the project)}
    $\subs \gets 0$; \\
    \For{$i \in N$}{
        \For{$j \in N$}{
            \If{$i \neq j$}{
                $\subs \gets \subs + \sqrt{c_{i} \cdot c_{j}}$;
            }
        }
    }
    \Return $\subs$
    \caption{Calculation of QF subsidy amounts for a single project}\label{alg:QF}
\end{algorithm}

Gitcoin runs its own funding rounds, and community members and external organizations also run funding rounds using Gitcoin's infrastructure. Each round has one or more ``round managers'' who oversee that round's operation, and who may or may not be Gitcoin employees (our interviewees are non-employee round managers). Many rounds are jointly managed by a combination of employees and non-employees. 

\section{Theoretical and Practical Issues with QF}\label{sec:QF problems}

While QF had benefits over other schemes and afforded the platform public interest, the algorithm also suffered from problems with equity and accuracy. We will explain this issue from both the theoretical and practical perspective.

Theoretically, QF derives its optimality (in part) from the assumption that all individuals are selfish and that, outside of the public good under consideration, all their consumption is of private goods. However, if individuals are social because of altruism, coordination or because beyond the present application they participate in networks of social consumption, QF loses its optimality. The following theorem makes this explicit.

\begin{theorem}
    Suppose $\alpha_{ii} = 1$ for all $i$, and $\alpha_{ij} > 0$ for at least one pair $p,q$ with $p \neq q$ and positive $u_q$. Then $\FQF$ does not maximize social welfare.
\end{theorem}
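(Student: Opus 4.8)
The plan is to run exactly the first-order analysis that delivers QF's optimality in the selfish case, but applied to the best-response game induced by the \emph{prosocial} payoffs, and then to see how the sympathy coefficients distort the resulting aggregate marginal condition. First I would set up the contribution game implied by Section~\ref{sec:QFs beginnings}: each agent $i$ chooses $c_i \geq 0$ to maximize $\hat{u}_i(\FQF) - c_i$, holding the other contributions fixed, where $\FQF = \bigl(\sum_{j\in N}\sqrt{c_j}\bigr)^2$. Differentiating gives $\partial \FQF/\partial c_i = \sqrt{\FQF}/\sqrt{c_i}$, so at an interior Nash equilibrium with funding level $F$ the first-order condition of each contributing agent is
\[
\hat{u}_i'(F)\cdot \frac{\sqrt{F}}{\sqrt{c_i}} = 1,
\qquad\text{equivalently}\qquad
\hat{u}_i'(F) = \frac{\sqrt{c_i}}{\sum_{j\in N} \sqrt{c_j}}.
\]

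Next I would sum these identities over all agents. Since the right-hand sides are precisely the contribution shares, they sum to one, and agents with $\hat{u}_i'(F)=0$ (the only possible non-contributors) add nothing, so $\sum_{i\in N}\hat{u}_i'(F) = 1$. Substituting $\hat{u}_i'(F) = \sum_{j} \alpha_{ij} u_j'(F)$ and exchanging the order of summation yields
\[
\sum_{j \in N} A_j\, u_j'(F) = 1,
\qquad A_j := \sum_{i \in N} \alpha_{ij}.
\]
By contrast, differentiating $USW$ shows the welfare optimum $\FOPT$ obeys the Samuelson condition $\sum_{j} u_j'(\FOPT) = 1$. The hypotheses give $A_j \geq \alpha_{jj} = 1$ for every $j$ and $A_q \geq 1 + \alpha_{pq} > 1$, hence at the equilibrium level $F$,
\[
1 = \sum_{j\in N} A_j\, u_j'(F) \;>\; \sum_{j\in N} u_j'(F),
\]
the strict gap coming from the term $(A_q-1)\,u_q'(F) > 0$, where $u_q'(F)>0$ is supplied by the positivity hypothesis on $u_q$. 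Therefore $\sum_j u_j'(F) < 1$, i.e. $USW'(F) = \sum_j u_j'(F) - 1 < 0$. Since social welfare is strictly decreasing at the equilibrium funding level (and that level is positive), one could raise welfare by lowering the funding slightly, so $\FQF$ is not a maximizer of $USW$ — the prosocial agents over-provide the good.

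I expect the main obstacle to be justifying that the equilibrium under consideration is interior, so that the \emph{equality} form of the first-order condition governs every nonzero term in the sum and the existence/uniqueness guarantees from the cited QF analysis transfer to the prosocial game. The useful structural fact here is that $\partial \FQF/\partial c_i = \sqrt{F}/\sqrt{c_i}$ blows up as $c_i \to 0^+$ whenever $F>0$, so any agent with $\hat{u}_i'(F)>0$ strictly prefers a positive contribution; this rules out boundary equilibria that combine $c_i=0$ with positive total funding, and forces agent $q$ (who has $\hat{u}_q'\geq u_q'>0$) to contribute. The remaining degenerate profile, where every $c_i=0$, must be dispatched separately: I would note it is either not an equilibrium (when some agent's marginal prosocial value at zero funding exceeds one, prompting a profitable deviation) or else yields $F=0$, which fails to maximize $USW$ whenever positive funding is socially beneficial. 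In every case the conclusion holds.
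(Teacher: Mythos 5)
Your proof is correct and follows essentially the same route as the paper's: derive each agent's first-order condition $\hat{u}_i'(\FQF)=\sqrt{c_i}/\sum_j\sqrt{c_j}$, sum to get $\sum_i\hat{u}_i'(\FQF)=1$, and use $\alpha_{pq}u_q'(\FQF)>0$ to conclude $\sum_j u_j'(\FQF)<1$, violating the Samuelson condition. Your treatment is in fact slightly more careful than the paper's — you make the column-sum substitution $A_j=\sum_i\alpha_{ij}$ explicit and address interiority and the all-zero profile, which the paper's proof passes over — but this is a refinement of the same argument, not a different one.
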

\begin{proof} Let $u(F) = \sum_i u_i(F)$.
Agent $i$'s contribution will be chosen to maximize
\[
\hat{u_i}\biggl(\biggl(\sum_{j \in N} \sqrt{c_j}\biggr)^2\biggr) - c_i
\]
which in equilibrium will have to satisfy
\begin{align}\label{math:qf pf}
    \frac{\hat{u_i}'(\FQF) \bigl(\sum_{j \in N} \sqrt{c_j}\bigr)}{2 \sqrt{c_i}} = 1 \Leftrightarrow \hat{u_i}'(\FQF) = \frac{\sqrt{c_i}}{\sum_{j \in N}\sqrt{c_j}}
\end{align}
by differentiation. Then we have
\begin{align*}
1 = \sum_{i \in N}\hat{u_i}'(\FQF) \geq \sum_{i \in N}u'_i(\FQF) + \alpha_{pq}u_q(\FQF) 
\Leftrightarrow u'(\FQF) < 1
\end{align*}
where the first equality is attained by summing Eq.~\ref{math:qf pf} across all agents.
But since all $u_i$s are smooth, concave, and increasing, social welfare is only maximized when $u'(\FQF) = 1$.
\end{proof}

Notice that in the above proof, QF specifically loses its optimality by \textit{overshooting} the optimal funding amount. This leads to a fairness issue in the context of a bounded subsidy pool, since large groups of agents who value the same good and have pro-social utilities for each other are able to ``convince'' the mechanism that their project matters more, draining subsidy from other projects. The following stylized example makes this issue explicit:

\begin{example}
    Let $g(F)$ be a smooth, convex, and increasing function. Suppose there are two separate public goods which can be funded. Let $F_1$ and $F_2$ denote the funding levels for these goods. Suppose a set of $n$ agents have the following utility functions over funding to the two goods:
    \begin{align*}
        u_i(F_1,F_2) = 
        \begin{cases}
            g(F_1) &\hbox{if}\;\; i \in\{1,2\}\\
            g(F_2) &\hbox{if}\;\; i =3\\
            0      &\hbox{if}\;\; 4 \leq i \leq n
        \end{cases}
    \end{align*}
    Furthermore set $\alpha_{i3}=1$ for all $i \geq 4$ and set all other $\alpha$ values equal to 0. 
    Total social welfare is 
    \begin{align*}
        2g(F_1) +g(F_2)
    \end{align*}
    So in the context of a bounded subsidy pool which is not large enough to fund both goods optimally, any distribution that gives more to $F_2$ than $F_1$ is inefficient. However, the existence of agents $4,\dots,n$, who all pro-socially donate on behalf of agent 3, will skew funding results towards $F_2$. In the limit as $n\rightarrow\infty$, the entire subsidy pool will be awarded to $F_2$.
\end{example}

    
    

The theory suggests, then, that in the context of a bounded subsidy pool, a large group of coordinated agents can shift funding outcomes in their favor and \textit{draw funding away from other projects}. This presents an equity issue in the sense that smaller groups are deprived of a voice in the funding process, even if those groups would stand to benefit from funding to their preferred projects. As our interviewees explain, this is also what happened in practice, as the incentives created for grantees turned QF into what many of them called ``a popularity contest''.

\subsection{``A Popularity Contest'': Qualitative Opinions of QF at Gitcoin}

When asked to open-endedly reflect on what they liked and disliked about QF, 6 out of 8 interviewees (P1-P3, P5-P7) indicated that Gitcoin's implementation of QF had the unfavorable tendency to disproportionately rewarded the most popular projects. Indeed, Gitcoin's donor base participates in a range of shared goods in communities beyond the reach of the platform, allowing for the type of coordination discussed above. According to P2:
\begin{quote}
\emph{
    I think the biggest complaint we hear with vanilla QF is it's more of a ``popularity contest''. So [if a project receives a lot of matching] is that a signal of popularity? Probably to some extent. Is that a signal that projects went out and did amazing things in the world and people think that they're important and should do more of it? Probably some of that too... at the end of the day, [it's] a little difficult to know what that signal means. 
}
\end{quote}
P3 corroborated:

\begin{quote}\emph{
    There is of course the classic thing with quadratic funding which is that if you're a project that has lots of friends, lots of people in Web3,\footnote{``Web3'' is a colloquial term for the larger blockchain-centric community Gitcoin operates in.} you can slightly farm the mechanism in some ways by having an active supporter base that's constantly continually donating to the rounds.
}\end{quote}
%
Three interviewees (P2, P6, P7) specifically indicated that QF's tendency to award the most popular projects constituted a fairness issue.

It is worth exploring why  interviewees took umbrage with QF's tendency to award the most popular projects. After all, in most democratic procedures, it is normal for the most popular option(s) to be favored by the mechanism (e.g., in an election, the most popular candidate wins). What is the difference here?

For P2 (quoted above), the issue was that a project's popularity is not always correlated with its quality. P5 felt similarly, further arguing that limited resources on the part of donors led this dynamic:

\begin{quote}\emph{
    People have limited resources, how do they decide that you are one of the projects in that basket... [donors] don't have the time to go and read [all the project descriptions] and then decide, so it ends up being a popularity contest in some sense -- how many people are able to recall your project's name?
}\end{quote}

P3 echoed this sentiment, explaining that 

\begin{quote}\emph{
    [Donors] go through the list and pick out the projects they either know or are connected to... rather than doing the rational economist view of how you want people to donate, which is to like really evaluate which is the best project that deserves the funding... Some [projects] are maybe a little bit less serious in terms of the talent that's actually behind them... they're more talented at marketing than they may be as seriously hardcore talented founders and builders with really well thought out projects.
}\end{quote}
These quotes help to illuminate the distortive dynamics created by advertising in the Gitcoin ecosystem. Since Gitcoin awards money to grantees with no strings attached, they may have an incentive to attain more funding for their project regardless of its value to others. This incentive, combined with the social nature of the Gitcoin ecosystem, leads to severe inequity: for  a grantee looking to attain more funding, marketing (i.e. attempting to alter the prosocial utilities of others) may be easier than building better services for the community. The result is that projects with more advertising resources (i.e., the ``popular'' ones) can dominate over projects that provide more utility to individuals.


On the other hand, not all of our interviewees took a completely negative view of these dynamics. While P5 did acknowledge the ``popularity contest'' dynamic, they also argued that

\begin{quote}\emph{
    The counterpoint is it's only a popularity contest for some time. I can go and get a lot of noise and votes one or two times... but if I'm not delivering and people haven't seen my project grow... my experience tells me, I've seen projects fall down. You could be very loud, but there is a correcting force that comes in. 
}\end{quote}

P8 also voiced concerns about the way large numbers of donors effect QF results, but framed the issue in terms of Sybil attacks, fraud, and other technical weaknesses. Gitcoin does employ other technical tools to limit Sybil attacks and other types of fraud, but QF is especially vulnerable to these types of fraud for the same reason it is vulnerable to the effects of prosocial utilities at large.\footnote{E.g., one could see a Sybil agent as an agent with strong pro-social utility toward its creator.}
P4 was our sole interviewee with nothing negative to say about QF:
\begin{quote}\emph{
    I understand quadratic funding to be a useful way of harnessing the wisdom of the crowd to determine fund allocation... I think it's it's also great just for participatory budgeting and fostering a community of people feeling like their voice is heard and that can sort of help cohere a community and an ecosystem.
}\end{quote}
However, most of our interviewees indicated that QF had a problematic tendency to excessively award large, well-marketed projects.

\section{CO-QF: A Plural Alternative}\label{sec:COQF design}

\textit{Connection Oriented Quadratic Funding} (CO-QF), our modification to QF, is based on the principle of plurality or "cooperation across difference" ~\cite{weyl2024plurality}. 
In other words, a plural approach to public goods funding prioritizes projects that are appreciated by diverse sets of participants, rather than homogeneous groups (i.e. groups whose members ostensibly have strong prosocial utilities towards each other). We hope the intuition behind this design direction is clear: if prosocial utilities create problems for QF, then prioritizing the inputs of agents who don’t have strong prosocial utilities for each other might ameliorate those problems.\footnote{Our choice to embrace the normative concept of plurality was inspired by Green and Viljoen's argument for ``Algorithmic Realism'': among other things, they argue that since algorithms can never be truly neutral, one might as well think carefully about the normative assumptions behind one's work and choose a justifiable and appropriate direction.~\cite{green2020algorithmic}}

However, in the context of practical use at Gitcoin, the social information we have access to is limited. While one could write down an algorithm that, say, directly referenced $\alpha_{ij}$ values in its calculations, how values could be collected is unclear. $\alpha_{ij}$ values cannot be known \textit{a priori} are in any case merely an modeling abstraction meant to explain a wide range of phenomena, not a directly measurable quantity. 
Furthermore, attempting to estimate $\alpha_{ij}$ values via additional 
user input risked compromising the fluidity of the user experience. 
%
We instead opted for a simpler approach to understanding prosocial behavior.

Specifically, CO-QF hinges on a conception of prosocial utilities inspired by the sociologist Georg Simmel. Simmel saw identity as being defined by one's memberships in social groups: for him, ``the groups to which an individual belongs form, so to speak, a system of coordinates, in such a way that each newly added group determines the individual more precisely and more unambiguously''~\cite{Simmel}. CO-QF uses mappings between individuals and the (relevant) groups they are members of as input, instead of $\alpha_{ij}$ values. We assume that one's group membership has some correlation with their $a_{ij}$ values (although as we show in Section~\ref{sec:simulations}, CO-QF works even when this correlation is not strong).

Formally, assume we have a set $G \subseteq 2^N$ of groupings of agents. Each element $g \in G$ is a subset of $N$, and for each $i \in g$, we assume there exists a weight $w_{i,g}$ describing the strength of $i$'s membership in $g$, normalized so that $\sum_{g \in G} w_{i,g} = 1$. CO-QF implements a plural perspective by awarding more funding to projects liked by \textit{different pairs social groups}, instead of projects liked by different pairs of individuals (as is the case in normal QF, and made explicit in Algorithm~\ref{alg:QF}). 
See algorithm~\ref{alg:COQF} for a technical description of how CO-QF calculates subsidy amounts.
 
\begin{algorithm}
    \KwInput{$c_{i}$ for $i \in N$ (contributions to the project), $G$ (social groups), and $w$ (weights in social groups)}
    \KwOutput{\subs  (the un-normalized amount of subsidy funding awarded to the project)}
    $\subs \gets 0$; \\
    \For{$g \in G$}{
        \For{$h \in G$}{
            \texttt{g\_sum} $\gets 0$;\\
            \texttt{h\_sum} $\gets 0$;\\
            \For{$i \in g \setminus h$}{
                $\texttt{g\_sum} \gets \texttt{g\_sum} + c_{i} \cdot w_{i,g}$;
            }
            \For{$j \in h \setminus g$}{
                $\texttt{h\_sum} \gets \texttt{h\_sum} + c_{j} \cdot w_{j,h}$;
            }
            $\subs \gets \subs + \sqrt{\hbox{\texttt{g\_sum}} \cdot \hbox{\texttt{h\_sum}}}$
        }
    }
    \Return $\subs$
    \caption{Calculation of Connection-Oriented QF (CO-QF) subsidy amounts for a single project}\label{alg:COQF}
\end{algorithm}

Informally, CO-QF calculates subsidies as follows. For each pair of social groups $g$ and $h$, we check if members of \textit{just one group, but not the other} have contributed to the project. A project's subsidy only increases if agents on both sides of this symmetric difference support the project -- in other words, if there is \textit{agreement across difference} on the merit of the project (where ``difference'' is relative to $g$ and $h$). In this way, projects with support from a plurality of social groups are prioritized. When summing contributions from members of a group, we multiply each contribution by that member's weight in the group to prevent members of many groups from having an outsized impact on funding amounts. Note that setting $G = \{\{i\}\;|\;i \in N\}$ recovers QF.

Algorithm~\ref{alg:COQF} actually reflects the second iteration of CO-QF. The first iteration had stronger properties from a mathematical standpoint but was unsuccessful in practice. See Appendix~\ref{apdx:first design} for details.

The inputs $G$ and $w$ are fairly general abstractions for information about social structure. For example, in a firm using CO-QF to make a decision, $G$ could divide the firm into its different departments. In a municipality using CO-QF, $G$ could indicate neighborhoods and/or tax brackets. In Gitcoin's case, finding an appropriate choice of $G$ was difficult due to the lack of data on donors (see Appendix~\ref{apdx:otherG} for details). Ultimately we decided to use the set of projects as the set of social groups. If donor $i$ donated to project $p$, they would be placed into project $p$'s group, with a weight corresponding to to the amount they donated to $p$ (relative to their entire amount donated). Despite the simplicity of this approach, it was successful in practice.

\section{Community Response to CO-QF}\label{sec:COQF response}

Before diving into the qualitative data, we will first give some high-level statistics concerning CO-QF's adoption at Gitcoin. As we introduced CO-QF, we retained the option for round operators to use QF, and added an option to use a hybrid mechanism blending CO-QF and QF results.\footnote{The hybrid approach simply took a weighted average of the subsidy amounts from each mechanism on a per-project basis. Round operators could weight CO-QF and QF equally, or they could choose a 25\%-75\% split in either direction.} 
Since CO-QF's introduction in late 2023, Gitcoin has facilitated 38 funding rounds, through which a total of around \$4.5 Million has been distributed. Of these 38 rounds, 32 have used CO-QF or a hybrid mechanism, giving away around \$4 Million in funding combined. See Figure~\ref{fig:adoption} for a more detailed breakdown of CO-QF's adoption metrics.
\begin{figure}
    \centering
    \includegraphics[width=\linewidth]{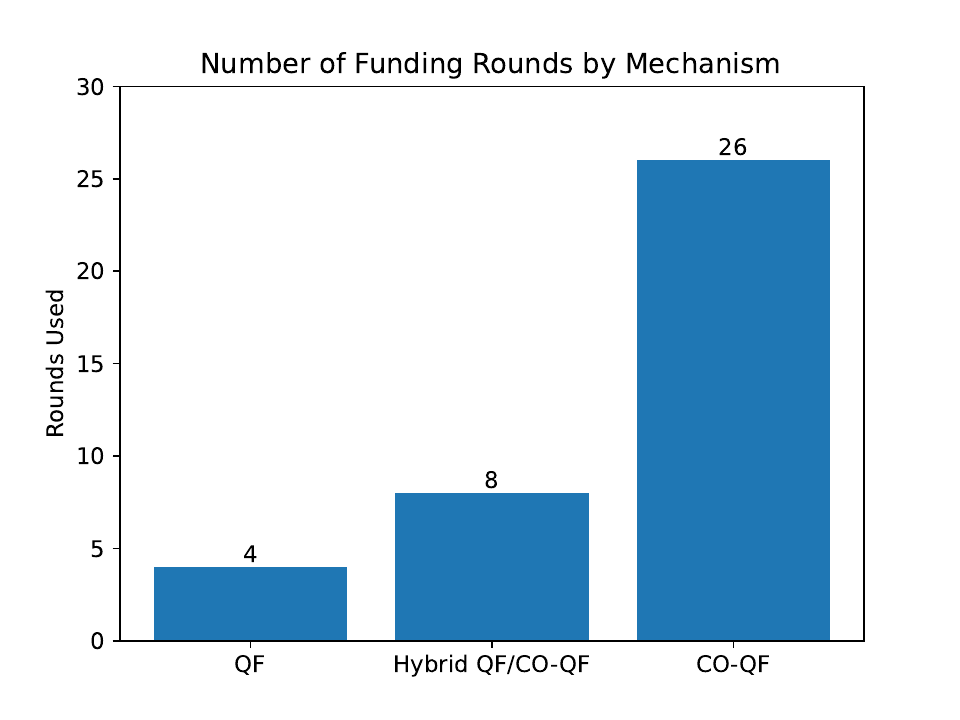}
    \includegraphics[width=\linewidth]{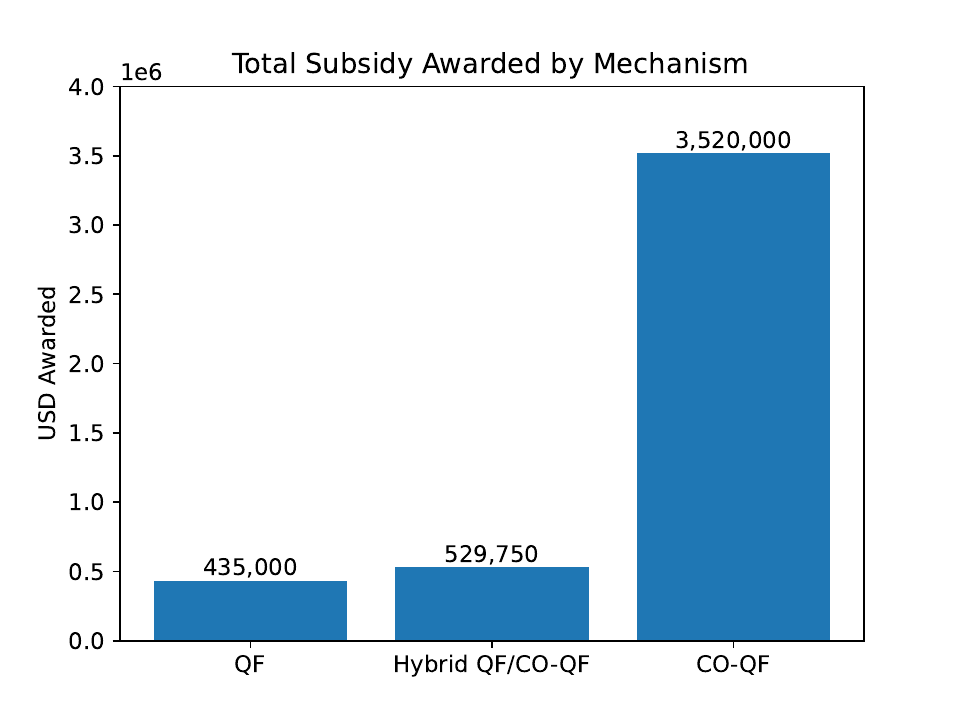}
    \caption{Number of rounds using each mechanism (top) and total subsidy awarded by each mechanism (bottom) since CO-QF's introduction. Subsidy amounts are in US Dollars.
    }
    \label{fig:adoption}
\end{figure}
These statistics indicate that CO-QF was successful in solving the needs of round managers, who chose to use CO-QF or a hybrid mechanism significantly more than they chose to use QF. 

Our interviews paint a richer picture of CO-QF's reception at Gitcoin. Interveiwees tended to evaluate CO-QF\footnote{In the Gitcoin community, the algorithm we call CO-QF is known by a few different names (COCM, Cluster Matching, Cluster Mapping, and others). In our interview transcriptions we have replaced instances of these names with ``CO-QF'' for clarity.} along two different axes: as a tool for distributing funding, and as a locus for sparking community engagement.

\subsection{CO-QF as a distributive tool}

All of our interviewees felt that CO-QF was an improvement over QF as far as its distributive capacities were concerned. 5 out of 8 interviewees (P1, P2, P5-P7) specifically indicated that CO-QF provided a clearer signal of which projects would benefit their community as a whole. P5 said, 

\begin{quote}\emph{
I think the main objectives were that people should not be able to fake support and there should be some actual wisdom of the crowds... to a great extent, CO-QF does succeed [in those respects]...
What we're trying to optimize for is that the right projects get the money. Not the ones who look right, but the ones that are right, and that difference is a pretty big one. The more we're able to lean towards the ones that are actually delivering, the more efficient the capital allocation that's happening. And CO-QF cuts through the noise. 
}\end{quote}
P1 gave the following evaluation:
\begin{quote}\emph{
    Well I think it was helpful in reducing fraud, if anything. Or, not necessarily fraud, but in terms of getting to the optimal -- like, ``fair'' is probably not the best word either, but getting to the most accurate depiction of what the total voting pool wanted most. So in our sense it was like, we want to get the best picture of what the community would get the best benefit from. So that's where CO-QF helped with individual projects that maybe were grouping in votes from single sources or just outside sources that didn't really encompass the true sentiment of the broader group.
}\end{quote}

P1's answer shows how different negative qualities -- vulnerability to fraud, unfairness, and failure to reflect broader group sentiment -- all can cohere in practice. On the other hand, P4, P6, and P7 specifically named fairness as a positive quality of CO-QF's distributions. P4 praised CO-QF's outcomes as ``\emph{fair}'' and ``\emph{equitable}''. As P7 explained,

\begin{quote}\emph{
    The main differnce between [CO-QF] and QF, is it's more fair, because it recognizes the people who are building, and those people need to have more voice in the space. 
}\end{quote}
P3 did not elaborate on why they liked CO-QF, simpy stating ``\emph{We're very pleased with CO-QF. It worked very well and I liked it}''. On the other hand, P7 pointed out a flaw in CO-QF's distributive abilities, which stems from our choice to use the set of projects as $G$:

\begin{quote}\emph{
    My understanding is that if a bunch of voters vote for the same set of projects, that is not given enough weight [under CO-QF]... if people have been seeing how projects have been delivering round after round, and if multiple people come to the same conclusion independently... if that happens -- and that has happened -- people who come to that conclusion start getting penalized. 
}\end{quote}

P8, who criticized QF for its vulnerability to fraud and Sybil attacks, felt that CO-QF ameliorated these problems. While we did not specifically design CO-QF with the primary goal of fraud/ Sybil attack prevention,\footnote{Gitcoin employs other tools to ameliorate these issues.} it does reduce the effects of some basic fraud and Sybil attacks patterns. P8 indicated that CO-QF was helped align grantee behavior with the goals of the round managers:
\begin{quote}
    \emph{We started telling our grantees, be aware of CO-QF -- rather than do the wrong things, just do the right thing and you will get [success], so don't try to game the system.
    }
\end{quote}

\subsection{CO-QF as a locus for community activity}

While interviewees praised CO-QF's distributive ability, they were mixed on CO-QF's function as an engine for community building. Indeed, while QF may have overshot optimal subsidy amounts, the promise of large subsidies was a driver of both donor and grantee activity. Donors felt excited to because ``\textit{a dollar to a project might mean they end up generating \$100 [in subsidy]}'' (P2) under the QF formula. Grantees felt excited to advertise in pursuit of more funding. Buzz around QF cemented Gitcoin as a hub for the broader community. Reflecting on Gitcoin's purpose, P3 remarked

\begin{quote}\emph{
    One of the core points is not optimal capital allocation, it's community building, the marketing, the collaborations that emerge... There are many collaborations that would not have happened without the rounds. 
}\end{quote}
P8 corroborated, saying ``\emph{the thing I like about QF is the community rally that it brings}''.
In this regard, CO-QF's introduction constituted a drawback. Since advertising was now a less effective strategy, and since an individual donation was no longer guaranteed to provoke subsidy increases for a project, half of our interviewees (P1, P2, P5, P7) worried that there was now less incentive to participate in the platform. P5 explained that ``\emph{There needs to be some amount of active onboarding also... [CO-QF is] deterring the overall growth}''. P5 went on to elaborate that since grantees ``\emph{optimize for number of votes and then end up feeling dejected}'', CO-QF threatened to slow down activity in their ecosystem. P1, P2, and P7 offered similar opinions. 

P6 and P8 offered a different perspective on CO-QF's effect on community growth. P6 noted that CO-QF gives more subsidies in response to donors who give to a diverse set of projects,\footnote{This is not true in general, but it is true when $G$ is set to be the set of projects.}
and therefore reasoned that ``\emph{if I need to donate to five projects for my matching to be counted then more donations are being driven by CO-QF, and that's great for everyone}''.
According to P8, ``\emph{CO-QF helps [grantees] look beyond their own projects}'', thereby increasing community engagement.

The partially negative evaluation of CO-QF's effect on community growth reveals the multiple  simultaneous ways community members conceive of the algorithm, and the different desiderata that come with each viewpoint. On one hand, as a distributive tool it is desirable for the algorithm to be equitable and/or allocate capital efficiently, which (for Gitcoin) necessitated limiting the effect of advertising on funding outcomes. On the other hand, the algorithm is also a means for generating engagement. From this perspective it is desirable for the algorithm to create incentives for grantees to widely advertise their projects. In other words, for the algorithm to be a good community engagement tool it may be desirable to \textit{increase} the effect of advertising on funding outcomes. The introduction of CO-QF, which improved on the former of these desiderata at the expense of the latter, shows how different conceptions of an algorithm can create conflicting goals for a community. 

The unique needs of communities need to be considered when designing algorithmic interventions \cite{russo2024bridging,green2020algorithmic,katell2020toward}, so negative evaluations of CO-QF's effect on community engagement led us to allow round managers to still use QF if they wanted, or to blend the results of the two algorithms together (the ``hybrid'' approach mentioned above).  

On the other hand, we are also interested in discovering general principles about mechanism design for equitable public decision making. So, while it is problematic for the Gitcoin community that grantees have less of an incentive to advertise under CO-QF, in the wider context of public goods funding mechanisms it is not clear whether this would constitute an issue. On the other hand, the positive evaluations of CO-QF's distributive and welfare-increasing abilities seem likely to carry over to other contexts. This is because (to the best of our knowledge) only some communities want a mechanism that drives community engagement, but many communities do want a mechanism that has positive effects on welfare or equity. Regardless, CO-QF helps to reveal a new dimension for algorithm design in which there is a tradeoff between incentives for community activity and anti-collusive properties.

\section{Simulations}\label{sec:simulations}

In Section~\ref{sec:QF problems} we showed that QF does not maximize utilitarian social welfare in the presence of prosocial utilities. In sections~\ref{sec:COQF design} and~\ref{sec:COQF response}, we explored an algorithmic intervention grounded in the needs and constraints a specific community. While our qualitative results indicate that CO-QF was successful in practice, the question still remains: does CO-QF provide \textit{quantitatively} better welfare in the presence of prosocial utilities? We answer that question affirmatively in this section. In the following simulations we explore equilibrium funding outcomes under QF and CO-QF, measuring how well each algorithm approximates the maximum social welfare. We also benchmark QF and CO-QF against a more standard direct donation mechanism where $F$ is simply the sum of the $c_i$s. 


We simulated donations from $n=25$  agents split into $5$ groups of $5$ agents each. Unlike in the case of Gitcoin, where multiple public goods compete for portions of a limited funding pool, we consider the case of just one public good and a potentially unlimited funding pool in order to match the models built to describe QF. 

Each agent has a utility function of the form $u_i = \beta_i\cdot\ln(F+1)$ for some $\beta_i \geq 0$. Functionally, $\beta_i$ is the variable describing how much agent $i$ ``likes'' the good. $\alpha_{ij}$ and $\beta_i$ values were chosen according to the following three parameters:
\begin{enumerate}
    \item We parameterized instances with an overall ``prosocial utility budget'' $B$, such that for each agent $i$, $\sum_{j \neq i} \alpha_{ij} = B$. We always set $\alpha_{ii} = 1$ for all $i$. Varying this parameter helps us understand how welfare changes when agents behave more or less pro-socially.
    \item A parameter $z$ represents the percent of each agent's prosocial budget allotted to the other agents in their group. $B$ and $z$ completely determine prosocial utilities for every agent. Varying this parameter helps us understand how welfare changes when group memberships are more or less correlated with underlying prosocial utilities.
    \item Lastly, we let each group draw their $\beta_i$ values from a truncated normal distribution (over [0,1]) with a fixed mean specific to that group and a variance $\sigma^2$ that was changed over the course of the experiments.
    We fixed the means of these distributions evenly over the range [0,1], so that group 1's distribution had a mean of 0, group 2's had a mean of 0.25, etc.\footnote{The full list of means is $[0, 0.25, 0.5, 0.75, 1]$.} Varying $\sigma^2$ helps us understand how welfare changes when group membership is correlated with one's utility for a public good.
\end{enumerate}

We varied $B$, $z$, and $\sigma^2$ over the course of our experiments. For each experiment, we performed 50 trials. For each trial, we set $\alpha_ij$ and $\beta_i$ values according to the appropriate parameters and calculated the Nash equilibrium donation amounts under QF, CO-QF, and the direct donation mechanism. We calculated the QF and CO-QF donation amounts via code that numerically solved the system of equations corresponding to that mechanism's Nash equilibrium. The Nash equilibrium under direct donations has a simple closed-form solution where the agent with the highest value for $\alpha_i \cdot \beta$ is the only one who donates. 

\subsection{Simulation Results}

Our figures show each mechanism's USW approximation ratio over increasing pro-sociality budgets $B$. Two separate figures show results under different values of $\sigma^2$. In each figure, we plot welfare guarantees under two different extreme values for $z$. At one extreme we set $z = 4/24$: each agent has equal prosocial utility for all other agents regardless of group. At the other extreme we set $z=1$: agents only have prosocial utility for members of their group.\footnote{Practically, for any choice of mechanism, $B$, and $\sigma^2$, approximation ratios changed monotonically with $z$, so the extreme choices we plot do reflect upper and lower bounds on welfare.} 

Our results are shown in Figure~\ref{fig:simulation}.\footnote{Our plots do not show error bars since they are too small to visually discern. Across all of our parameter choices, the standard error was at most $0.008$ percentage points.}
\begin{figure}
    \centering
    \includegraphics[width=\linewidth]{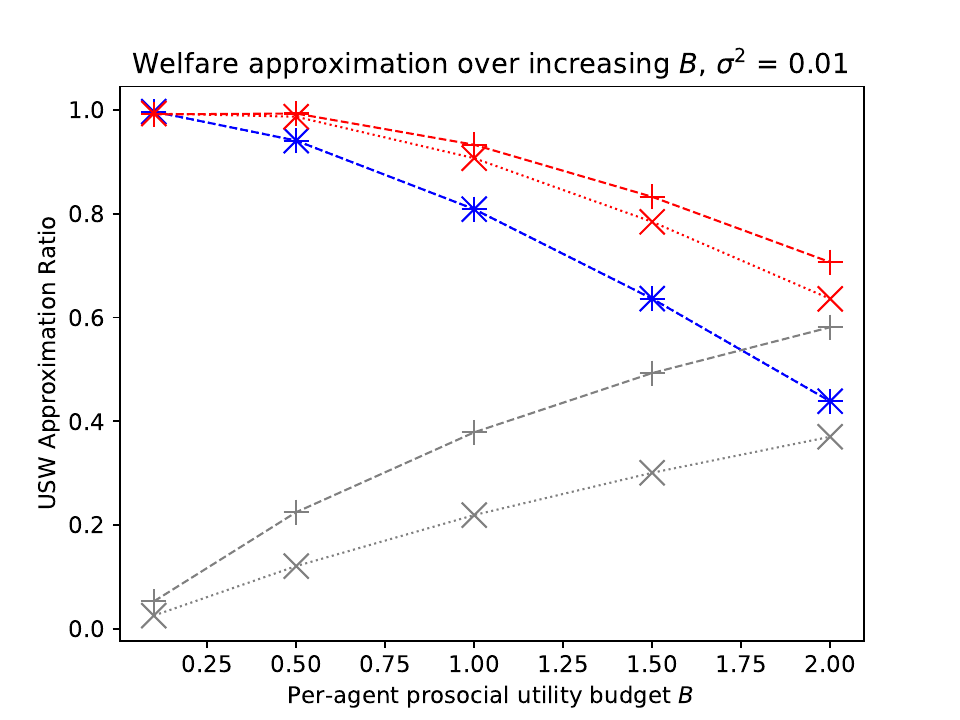}
    \includegraphics[width=\linewidth]{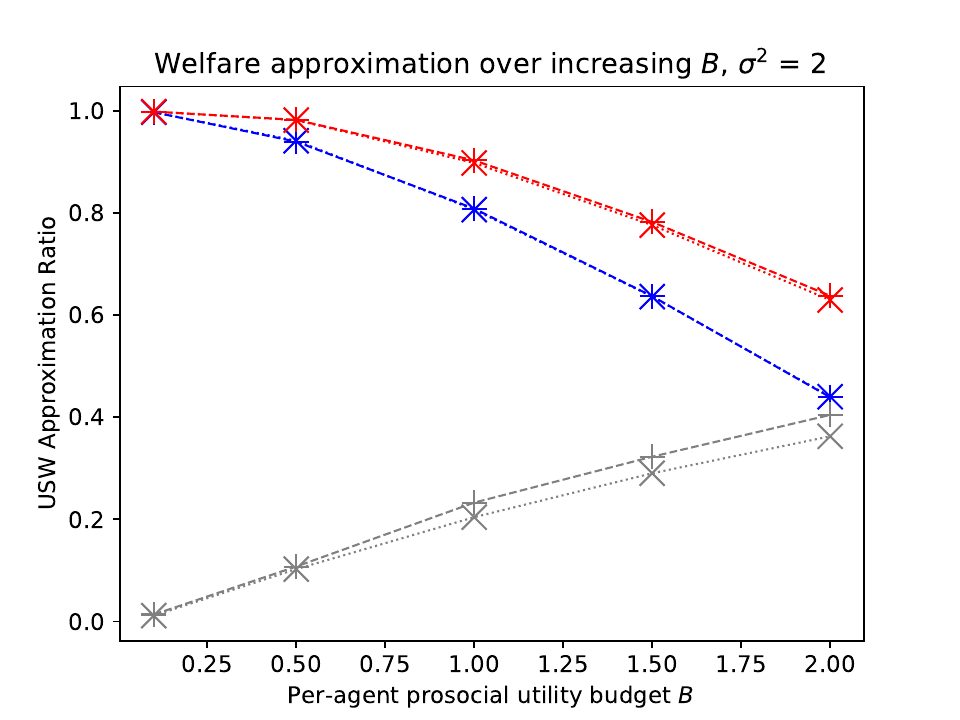}
    \includegraphics[width=0.5\linewidth]{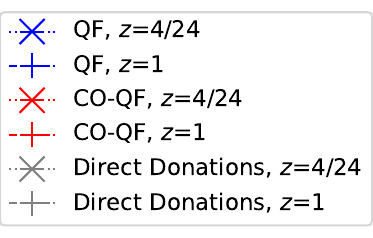}
    \caption{Welfare approximations of QF, CO-QF, and direct donations for varying prosocial budgets. Each data point shows the average approximation ratio of the maximum USW over 50 trials.}
    \label{fig:simulation}
\end{figure}
These results suggest that CO-QF is capable of providing better welfare guarantees than QF and direct donations over the entire range of pro-social utility budgets $B$\footnote{We tested with $B$ values of 0.1, 0.5, 1, 1.5, and 2. Note that at our upper limit of $B=2$ corresponds to an agent caring \textit{twice} as much about all other agents combined than they do about themselves} that we tested. As $B$ approaches 0 QF's approximation ratio approaches 1 as expected.

Figure~\ref{fig:simulation} also suggests that CO-QF is robust to changes in $z$, our parameter denoting how much of one's pro-social utility budget is allotted to agents in their group. The groups CO-QF takes as input are at best estimations, and these simulations suggest that they can indeed be estimations without severely compromising result quality. Lastly, Figure~\ref{fig:simulation} suggests that the performance gap between CO-QF and QF does not depend too much on the correlation between group membership and valuation for the public good. 


Due to our choices of fixed parameters (e.g. the number of agents and general shape of utility functions), these simulations cannot establish the complete landscape of welfare guarantees afforded (or not afforded) by CO-QF. However, they do suggest that 1) CO-QF provides at least as much welfare as QF, even in the original scenario QF was designed for, 2) CO-QF works well even when the groups it takes as input do not represent underlying pro-social utilities completely accurately, and 3) CO-QF is relatively robust to strong corelations between group memberships and valuations.

\section{Discussion}\label{sec:discussion}

Our data suggests that CO-QF had a positive impact on the Gitcoin community, increasing fairness and economic efficiency and gaining sustained adoption over a period of one and a half years. Even though CO-QF did suffer important drawbacks, those issues are specific to Gitcoin and other settings where the mechanism may be seen as a locus for community engagement. On the other hand, CO-QF's strengths over QF (that is, of being fairer and welfare-increasing) seem likely to matter in a wider range of contexts traditionally of interest to those in the mechanism design space. 

This research simultaneously examined the effect of sociality on welfare and fairness outcomes in public decision making, and prioritized the input of a particular social community. Thus, we hope that the methodology of this paper is consistent with its substance, and taken as a whole we hope it mounts a case that the inherent sociality of humans ought to be prioritized in both algorithm design and evaluation. To that end we see several avenues for future work.

\begin{itemize}
    \item As we detailed above, the set of social groups $G$ may be drawn from a source of ground truth (i.e., a firm may use its departments as $G$). When users are instead free to identify themselves with any groups they see fit, questions of how to incentivize truthful reporting naturally arise. Additionally, note that our experiments fixed a set of social groups and from there calibrated pro-social utilities: one could also endogenize pro-social utilities to understand group formation.
    \item While CO-QF can provide better welfare than QF, it is still sub-optimal in many scenarios. Simultaneously, because of the particular ways that any given community might instantiate CO-QF, it may be more useful to prioritize welfare guarantees that are germane to those specific circumstances. 
\end{itemize}

Recall that Quadratic Funding shares similarities with a voting mechanism called Quadratic Voting. The design principles behind CO-QF can easily be applied to Quadratic Voting as well, as we detail in Appendix~\ref{apdx:co-qv}. In this way, we see CO-QF not just as a mechanism for public goods funding but as a blueprint for all manner of collective decision making, guided by the basic principles of plurality and an understanding of humans as inherently social.



\section*{Conflict of Interest Statement}

The design of the second iteration of CO-QF (i.e. Algorithm~\ref{alg:COQF}) and parts of this research project were carried out while one of the authors was employed at Gitcoin part-time. The other author(s) have no current or past affiliation with Gitcoin.

\bibliographystyle{ACM-Reference-Format} 
\bibliography{references}

\appendix
\section{An alternative design for CO-QF}\label{apdx:first design}

In our first pass designing CO-QF, we aimed to provide mathematically strong guarantees that large groups couldn't extract too much money from the subsidy pool. However the resultant algorithm was unusable in practice due to the way it reacted to networks with small world properties. In this section we detail the stronger mathematical guarantees we sought, describe the first pass of CO-QF and show that it satisfies those guarantees, and finally explain why this first algorithm was unsuccessful in practice. 

\subsection{Mathematical Desiderata}

One interesting property of QF is that it provides decreasing returns in the subsidy amount as any one individual increases their contribution: fixing all contributions besides $c_i$, $\FQF - \sum_j c_j$ grows at a rate of $O\bigl(\sqrt{c_i}\bigr)$. However, groups of agents break this property. If a group of agents collectively increase their contribution to some project by $x$, the subsidy amount instead grows at a rate of $O(x)$. 

To see why this is, suppose we have a set $N$ of $n$ agents, and a subset $g \subseteq N$. Then we can re-write the subsidy amount as 
\[
  \biggl(\sum_{i \in N} \sqrt{c_i}\biggr)^2 - \sum_{i \in N} c_i = \sum_{i \neq j, i \in N, j \in N} \sqrt{c_i \cdot c_j}
  \geq \sum_{i \neq j, i \in g, j \in g} \sqrt{c_i \cdot c_j}
\]
If each member of $g$ increases their donation by $x$, then the subsidy amount is at least
\[
   \sum_{i \neq j, i \in g, j \in g} \sqrt{(c_i + x) \cdot (c_j + x)} 
   =  (|g|^2 - |g|) \cdot O(x) 
   = O(x)
\]

In other words, under QF individuals are awarded with sub-linear subsidy growth for an increase to their contribution, but groups are awarded with linear subsidy growth.

In light of this observation, our original desiderata for CO-QF was to retain the sub-linear subsidy increase even when groups donated. This felt appropriate in the sense that we were extending an appealing property that QF had for individuals into the world of groups. The question of how to group agents together depends on context, but we aimed to at least create a mechanism that could satisfy this property relative to any set of groups we fed it as input. Our original desiderata were formally stated as follows. 
\begin{definition}[Original Desiderata for CO-QF]\label{def:desiderata}
    Suppose $G \subseteq 2^N$ is a set of groups of agents. Then 
    CO-QF, parameterized with $G$, should award a funding amount $\FCOQF$ such that
    \begin{enumerate}
        \item For any $i \in N$ contributing $x$, $\FCOQF - \sum_j c_j = O\bigl(\sqrt{x}\bigr)$
        \item For any $g \in G$, if the members of $g$ all contribute $x$, then $\FCOQF - \sum_j c_j = O\bigl(\sqrt{x}\bigr)$
    \end{enumerate}
\end{definition}

\subsection{First Definition of CO-QF and Proof}

Given a set of agents $N$ and a set of groups $G$, let $T_i$ denote the set of groups of which agent $i$ is a member. For an agent $i$ and a group $h$, let $k_{i,h}$ denote a measure of social connection between agent $i$ and group $h$. In particular, Define $k_{i,h}$ as 
\begin{align*}
    k_{i,h} = 
    \begin{cases}
        1 &\hbox{if}\;\; T_i \cap \bigcup_{j \in h} T_j = \emptyset\\
        2 &\hbox{otherwise}
    \end{cases}
\end{align*}
Then funding amount under CO-QF was originally defined as 
\[
    \FCOQFp = \sum_{i \in N} c_i + \sum_{g\neq h, g \in G, h \in G} \sqrt{\sum_{i \in g} \frac{c_i^{1/k_{i,h}}}{|T_i|}\cdot \sum_{j \in h}\frac{c_j^{1/k_{j,g}}}{|T_j|}}
\]

Note that if $G = \{ \{i\} \;|\;  i \in N\}$ then this iteration of CO-QF also gives the same funding as QF. This iteration of CO-QF satisfies the desiderata of Definition~\ref{def:desiderata}, as we show below. 

\begin{theorem}
    The original design of CO-QF satisfies the desiderata of Definition~\ref{def:desiderata}.
\end{theorem}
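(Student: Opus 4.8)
The plan is to verify both desiderata directly from the closed form, cancelling the $\sum_{i \in N} c_i$ term so that the subsidy becomes
\[
\FCOQFp - \sum_{j} c_j = \sum_{g \neq h} \sqrt{A_{g,h}\cdot B_{g,h}},
\]
where $A_{g,h} = \sum_{i \in g} c_i^{1/k_{i,h}}/|T_i|$ and $B_{g,h} = \sum_{j \in h} c_j^{1/k_{j,g}}/|T_j|$, the sum ranging over the finitely many ordered pairs $(g,h)$ with $g,h \in G$. Since there are only finitely many summands, it suffices to bound the growth of a single summand $\sqrt{A_{g,h} B_{g,h}}$ in the relevant variable and then sum. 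The whole argument hinges on one structural fact about $k$: a contribution $c_i$ enters a factor at the linear exponent $1/k_{i,h} = 1$ only when agent $i$ shares no group with any member of the paired group; in particular $k_{i,h} = 2$ whenever $i \in h$, because then $T_i \subseteq \bigcup_{j \in h} T_j$, so the intersection defining $k$ is nonempty.

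First I would dispatch Desideratum 1. Fixing a single agent $i$ with $c_i = x$ and all other contributions constant, I trace the terms in which $x$ appears: pairs $(g,h)$ with $i \in g$ (so $x$ enters $A_{g,h}$ at exponent $1/k_{i,h}$) and pairs with $i \in h$ (so $x$ enters $B_{g,h}$ at exponent $1/k_{i,g}$). The key point is that $x$ cannot sit at the linear exponent in both factors of the same term: if it appears at exponent $1$ in $A_{g,h}$ then $k_{i,h}=1$, which forces $i \notin h$, so $x$ does not enter $B_{g,h}$ at all. Hence in every term the product $A_{g,h} B_{g,h}$ is $O(x)$ --- either one factor is $O(x)$ and the other $O(1)$ in $x$, or, when $i \in g \cap h$, both factors are $O(\sqrt{x})$ since both exponents equal $1/2$. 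Taking the square root gives $O(\sqrt{x})$ per term, and summing over finitely many pairs preserves $O(\sqrt{x})$.

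The harder case is Desideratum 2, where I expect the real work. Fix $g_0 \in G$ and set $c_i = x$ for every $i \in g_0$. A naive bound is too weak: if one factor grew linearly (from a $g_0$-member at exponent $1$) while the other grew like $\sqrt{x}$ (from $g_0$-members at exponent $1/2$), the term would be $\sqrt{x \cdot \sqrt{x}} = x^{3/4}$, violating the claim. The crux is therefore to rule this configuration out. Concretely, if $A_{g,h}$ contains some $i^\star \in g \cap g_0$ with $k_{i^\star,h}=1$, then $T_{i^\star} \cap \bigcup_{j\in h} T_j = \emptyset$; since $g_0 \in T_{i^\star}$, this forces $g_0 \cap h = \emptyset$, so no member of $h$ has a contribution growing with $x$, making $B_{g,h} = O(1)$. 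Symmetrically, a linearly-growing $g_0$-entry in $B_{g,h}$ forces $A_{g,h} = O(1)$. Thus in every term either one factor is $O(x)$ and the other $O(1)$, or neither factor has a linear $g_0$-entry and both are $O(\sqrt{x})$; in all cases $A_{g,h} B_{g,h} = O(x)$, the summand is $O(\sqrt{x})$, and summing over finitely many pairs finishes.

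I expect the main obstacle to be stating the dichotomy in Desideratum 2 cleanly --- specifically the implication ``$A_{g,h}$ has a linearly-growing $g_0$-term $\Rightarrow g_0 \cap h = \emptyset \Rightarrow B_{g,h}$ is bounded in $x$'' --- and keeping the exponent bookkeeping ($1/k \in \{1, \tfrac{1}{2}\}$) consistent with the outer square root, so that no term secretly escapes to $x^{3/4}$. The single-agent case is essentially a special instance of the same observation with $g_0 = \{i\}$, so once the grouped argument is nailed down, Desideratum 1 follows with little extra effort.
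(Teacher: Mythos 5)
Your proposal is correct and follows essentially the same route as the paper: decompose the subsidy into the finitely many pairwise terms $\sqrt{A_{g,h}B_{g,h}}$, and use the structural fact that a contribution can only appear at the linear exponent $1/k=1$ when the contributing agent shares no group with the paired factor, which forces that other factor to be constant in $x$ and rules out the dangerous $x^{3/4}$ configuration. The paper organizes Property 2 by casing on whether the contributing group intersects $h_1$ and/or $h_2$ rather than on where linear exponents occur, but this is the contrapositive of your dichotomy and the substance is identical.
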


\begin{proof}
    First observe that 
    \begin{align*}
        \FCOQFp - \sum_{i}c_i = \sum_{g\neq h, g \in G, h \in G} \sqrt{\sum_{i \in g} \frac{c_i^{1/k_{i,h}}}{|T_i|}\cdot \sum_{j \in h}\frac{c_j^{1/k_{j,g}}}{|T_j|}}
    \end{align*}
    Since there are a constant number of terms in this sum, it suffices to show that each individual term of the form 
    \begin{align}\label{math:interaction term}
    \sqrt{\sum_{i \in g} \frac{c_i^{1/k_{i,h}}}{|T_i|}\cdot \sum_{j \in h}\frac{c_j^{1/k_{j,g}}}{|T_j|}}
    \end{align}
    satisfies the asymptotic requirements of Definition~\ref{def:desiderata}.
    
    \textit{(Property 1)} Let $g$ and $h$ be members of $G$. If $i \notin g$ and $i \notin h$, then Eq.~\ref{math:interaction term} is constant with respect to $c_i$. If $i \in g$ and $i \in h$, then Eq.~\ref{math:interaction term} is $O\bigl(\sqrt{c_i}\bigr)$ since $k_{i,g} = 2$ and $k_{i,h} = 2$. Disregarding constants, Eq.~\ref{math:interaction term} will be $O\bigl(\sqrt{\sqrt{c_i} \cdot \sqrt{c_i}} \bigr)= O\bigl(\sqrt{c_i}\bigr)$. If $i$ is in one of $g$ or $h$, then Eq.~\ref{math:interaction term} is either $O\bigl(\sqrt{c_i}\bigr)$ or $O\bigl(c_i^{1/4}\bigr)$, both of which satisfy property 1.

   \textit{(Property 2)} Let $g$ be a member of $G$ and suppose all members of $g$ contribute $x$. Now consider a term of the form outlined in Eq.~\ref{math:interaction term} with respect to two groups $h_1$ and $h_2$ (possibly, one or both of these groups are $g$ itself). If $|g \cap h_1| = 0$ and $|g \cap h_2| = 0$ than the term is a constant with respect to $x$. Suppose $g$ overlaps with one group but not the other, w.l.o.g. assume $|g \cap h_1| > 0$ and $|g \cap h_2| = 0$. Then we have 
\begin{align*}
    &\sqrt{\sum_{i \in h_1} \frac{c_i^{1/ k_{i,h_2}}}{|T_i|}\cdot \sum_{j \in h_2}\frac{c_j ^{1/k_{j,h_1}}}{|T_j|}} 
    \\%
    \leq &\sqrt{\biggl(|g| * x + \sum_{i \in h_1 \setminus g} \frac{c_i^{1/k_{i,h_2}}}{|T_i|}\biggr)\cdot \sum_{j \in h_2}\frac{c_j ^{1/ k_{j,h_1}}}{|T_j|}}
    = O(\sqrt{x})
\end{align*}
        
   where the inequality comes from noting that at most all members of $g$ are also in $h_1$, and from noting that $k_{i,h_2} \geq 1$ for all $i \in g$ and $|T_i| \geq 1$ for all $i$. The equality comes from removing constants with respect to $x$.
   
   Finally, suppose that $|g \cap h_1| > 0$ and $|g \cap h_1| > 0$. In this case, we will have $k_{i,h_2} = 2$ for all $i \in g \cap  h_1$ and $k_{j,h_1} = 2$ for all $j \in g \cap h_2$, so the term becomes
   \begin{align*}
        &\sqrt{\sum_{i \in h_1} \frac{c_i^{1/ k_{i,h_2}}}{|T_i|}\cdot \sum_{j \in h_2}\frac{c_j ^{1/k_{j,h_1}}}{|T_j|}} 
        \\%
        \leq 
        &\sqrt{\biggl(|g| * \sqrt{x} + \sum_{i \in h_1 \setminus g} \frac{c_i^{1/k_{i,h_2}}}{|T_i|}\biggr)\cdot \biggl(|g| * \sqrt{x} + \sum_{i \in h_1 \setminus g} \frac{c_i^{1/k_{i,h_2}}}{|T_i|}\biggr)}
        \\%
        = &O\biggr(\sqrt{\sqrt{x} \cdot \sqrt{x}})\biggl)
        = O(\sqrt{x})
    \end{align*}
    Where the inequality comes from noting that at most, all members of $g$ are also in $h_1$ and $h_2$, and $|T_i| \geq 1$ for all $i$. The equalities come from disregarding constants with respect to $x$ and simplifying.
\end{proof}

\subsection{Moving Away From the First Design} 

Like the second design for CO-QF detailed in the main text of the paper, the first design iterates over pairs of groups. However, the first design of CO-QF can also attenuate contributions by taking their square root an additional time during this iteration. For example, suppose the algorithm is iterating over the pair of groups $(g,h)$. Then for an agent $i \in g$, the first design of CO-QF takes an additional square root of $c_i$ if $i$ is in $h$ \textit{or} is in any group with a member of $h$. While taking additional square roots did help us guarantee that subsidies would grow sub-linearly in response to group contributions, in practice the groups we worked with often had enough overlap that every single contribution would be square rooted twice. Attenuating every contribution in this way led to funding results that were very even. 

Unfortunately we did not collect qualitative interview data on opinions about this first mechanism, so we are unable to report on its reception with certainty. However, casual conversations at the time with Gitcoin employees, round managers, and grantees broadly bore the same theme: these stakeholders viewed the exceedingly flat funding results as a cause for concern, since it was not clear how the ``good'' projects were being prioritized, if at all. 

Therefore, in the second pass of desiging CO-QF we attempted to retain some of the spirit of the first version, while not attenuating contributions as much. Unfortunately, this did mean relinquishing the guarantees of Definition~\ref{def:desiderata}, but we see this as another interesting example of the tensions that arise between mathematical desiderata and the practical circumstances of a community.

\section{Other Ways of Choosing Social Groups}\label{apdx:otherG}

Before settling on using donation data to instantiate $G$ and $w$, we tried two different approaches.

First, since many Gitcoin donors use blockchain tools and applications, the platform looked to public blockchain-based sources of data. The platform specifically looked at using data from POAP (Proof Of Attendance Protocol) \footnote{\url{https://poap.xyz/}} and Guild \footnote{\url{https://guildprotocol.io/}}. Both of these tools are allow users to publish public, cryptographically verifiable attestations to various social arrangements on the Ethereum blockchain. The chief difference between the two is that POAP tends to be used for logging information related to specific events (i.e., who attended a certain conference or hackathon), whereas Guild tends to be used to register members of an organization. However, only around 20\% of Gitcoin donors had published any records relating to POAP or Guild on the Ethereum blockchain, so there was not enough data to work with.

Next, the platform experimented with using donation data, but in a different way. Within a round, each user was assigned a binary string corresponding to the set of projects they donated to. So under this scheme, in a round with $m$ projects, $G$ is the set of binary vectors of length $m$, and each user $i$ is assigned to the exactly one group, namely the group
\[
(\textbf{1}\{c_{i,p} > 0\})_{1\leq p \leq m}
\]
where $\textbf{1}\{c_{i,p} > 0\}$ is an indicator variable that is $1$ if $i$ donated to project $p$, and $0$ otherwise. 

This scheme was discarded because of a security flaw. A group of coordinating individuals (or Sybils) all aiming to support a project $p$ can easily appear diverse from each other with the following strategy: each agent donates a significant amount to $p$, and some small amount to a unique set of other projects. Since $G$ has $2^m$ elements, any group of colluding agents of size less than $2^{m-1}$ can easily donate such that they are all put in different groups, thereby executing an attack comparable in severity to what is possible under standard QF. While preventing Sybil attacks and fraud was not our main goal in designing CO-QF, this vulnerability was too severe to be overlooked.

\section{An Extension Quadratic Voting}\label{apdx:co-qv}

Quadratic Voting (QV) is a voting mechanism closely related to QF, and the modifications to QF discussed in this paper can also apply to QV. In QV agents spend ``voice credits'' on an issue, which the mechanism converts to effective votes at quadratic costs. If agent $i$ spends $v_i$ voice credits towards an issue, then the effective number of votes given to that issue is 
\[
    \sum_{i\in N}\sqrt{v_i}
\]
Notice that if we let the $v_i$ values denote contributions (i.e., letting $c_i = v_i$), then the above formula can be re-written as the root of the total amount of funding under vanilla QF:
\[
    \sum_{i\in N}\sqrt{v_i}=
    \sqrt{\left( 
    \sum_{i \in N}\sqrt{c_i}
    \right)^2}
\]
At first glance, this might seem like a needlessly complicated way to re-write the formula for QV. However, the point is that instead of putting the formula for vanilla QF inside the square root on the RHS of the above equality, we could plug in CO-QF instead. In other words, this way of generalizing QV (as the root of the funding amount given by some QF-like formula) opens the door to the use of CO-QF for voting as well.

\end{document}